\documentclass[11pt]{article}

\usepackage[margin=1in]{geometry}
\usepackage[utf8]{inputenc} 
\usepackage[T1]{fontenc}    
\usepackage{lmodern}
\usepackage{booktabs}       
\usepackage{amsfonts}       
\usepackage{nicefrac}       
\usepackage{microtype}      
\usepackage{xcolor}         

\usepackage[round]{natbib}

\usepackage{mathtools}

\usepackage{amsmath,amsthm,amsfonts,amssymb,mathdots,array,mathrsfs,bm,bbm,stmaryrd,graphicx,xcolor}
\usepackage{algorithm}
\usepackage{breakcites}
\colorlet{shadecolor}{gray!80}

\usepackage{enumerate}
\usepackage{inputenc}

\usepackage{graphicx} 
\usepackage{booktabs,balance}
\usepackage{rotating}
\usepackage{boldline}
\usepackage{makecell}
\usepackage{multirow}
\usepackage{balance}

\usepackage[colorlinks,linkcolor=red,filecolor=blue,citecolor=blue,urlcolor=blue]{hyperref}
\usepackage[capitalize,noabbrev]{cleveref}

\usepackage{url}
\theoremstyle{plain}
\newtheorem{theorem}{Theorem}[section]

\newtheorem{lemma}[theorem]{Lemma}

\theoremstyle{definition}
\newtheorem{definition}[theorem]{Definition}

\usepackage{enumitem}
\usepackage{nicefrac}
\usepackage{url}
\usepackage{wrapfig}
\usepackage[utf8]{inputenc}
\usepackage{graphicx}
\usepackage{latexsym}
\usepackage{amsmath}
\usepackage{url}
\usepackage{float}
\usepackage{tikz}
\usepackage{fancyvrb}
\usepackage{listings}
\usepackage{xcolor}
\usepackage{color}
\usepackage{soul}
\usepackage{multirow}
\usepackage{multicol}
\usepackage{subfig}
\usepackage{booktabs}
\usepackage{caption}

\usepackage[toc,page,header]{appendix}
\usepackage{minitoc}
\usepackage{todonotes}

\usepackage{xcolor}

\usepackage{mathtools}
\usepackage{algorithm}
\usepackage{algorithmicx}
\usepackage{algpseudocode}
\usepackage{subfig}

\algdef{SE}[DOWHILE]{Do}{doWhile}{\algorithmicdo}[1]{\algorithmicwhile\ #1}%

\DeclareMathOperator{\argmax}{argmax}
\DeclareMathOperator{\argmin}{argmin}

\usepackage[capitalize]{cleveref}

\begin{document}

\title{\bf \huge Faster  Algorithms for Generalized Mean Densest Subgraph Problem}

\author{\vspace{0.3in}\\\textbf{Chenglin Fan, Ping Li and Hanyu Peng} \\\\
Cognitive Computing Lab\\
Baidu Research\\
No.10 Xibeiwang East Road, Beijing 100193, China\\
10900 NE 8th St. Bellevue, Washington 98004, USA\\
  \texttt{\{chenglinfan2020,\ pingli98,\  hanyu.peng0510\}@gmail.com}
}
\date{\vspace{0.5in}}

\maketitle

\begin{abstract}\vspace{0.2in}

\noindent\footnote{This work was initially  submitted in February  2022 to the SIGKDD'22 conference. The authors sincerely thank the helpful comments from the reviewers of past submissions, e.g., the Program Committee of WWW'23.}The densest subgraph of a large graph usually refers to some subgraph with the highest average degree, which has  been extended
to the family of $p$-means dense subgraph objectives by~\citet{veldt2021generalized}. The $p$-mean densest subgraph problem seeks a subgraph with the highest average $p$-th-power degree, whereas the standard densest subgraph problem seeks a subgraph with a simple highest average  degree.
It was shown  that the standard peeling algorithm can perform arbitrarily poorly on generalized objective when $p>1$ but uncertain when $0<p<1$. In this paper, we are the first to show that a standard peeling algorithm can still  yield $2^{1/p}$-approximation for the case $0<p < 1$.
\citet{veldt2021generalized} proposed a new generalized peeling algorithm (GENPEEL), which for $p \geq 1$ has an approximation guarantee ratio
$(p+1)^{1/p}$, and time complexity $O(mn)$, where $m$ and $n$ denote the number of edges and nodes in graph respectively.
In terms of algorithmic contributions, we propose a new and faster
generalized peeling algorithm (called \textsc{GENPEEL++} in this paper), which for $p \in [1, +\infty)$ has an approximation guarantee ratio
$(2(p+1))^{1/p}$, and time complexity $O(m(\log n))$, where $m$ and $n$ denote the number of edges and nodes in graph, respectively.
This approximation ratio  converges to 1 as $p \rightarrow \infty$.

\vspace{0.2in}

\noindent Our experiments  show that  GENPEEL++  can obtain extremely close approximations to the previous GENPEEL algorithm, and it performs significantly (e.g., up to 10x) faster than the GENPEEL algorithm in several large-scale real-world datasets for both $p<1$ and $p>1$. \vspace{0.05in}
\end{abstract}

\newpage

\section{Introduction}

The problem of dense components detection in a graph has been
extensively studied~\citep{lee2010survey,gionis2015dense,tsourakakis2015k,zhang2017hidden, sariyuce2018peeling,shin2018patterns,  ma2020efficient, liu2021efficient}.
Various definitions of density have been explored.The problem of finding dense subgraphs can be considered as an variant of correlation clustering~\citep{becker2005survey}. The difference between them is that the former only cares about the internal edges of a subgraph, while the latter counts both the
internal edges and external edges connecting to the rest of the graph.
In this paper, the notation of density we are interested in is, roughly speaking, the mean average degree of a subgraph. More details about the definitions of the mean densest subgraph are given in Section~\ref{sec:pre}.

From a theoretical perspective, dense components in graph have many interesting properties. For example, dense components have small diameters (the shortest path between two nodes). Also, dense components are robust, so the nodes in components may still be connected after part of the edges/connections are broken.
Based on those properties, dense components have been identified in  enhanced understanding of various types of networks in the real world.
Among the best-knowns are communities in social networks~\citep{sozio2010community}, DNA motifs~\citep{fratkin2006motifcut}, trending topics in social media~\citep{angel2012dense}, brain networks~\citep{lanciano2020explainable}, the World Wide Web~\citep{kumar1999trawling}, and financial markets~\citep{nagurney2003innovations}, etc.
Although there are exponentially many subgraphs, the problem of finding the densest subgraph of a given graph  can be solved optimally
in polynomial time~\citep{goldberg1984finding}. In addition, \citet{charikar2000greedy} showed that we can find a 2 -approximation solution
to the densest subgraph problem in linear time using a very simple greedy algorithm, which is similar to a greedy strategy  previously studied by~\citet{asahiro1996greedily}.	
Hardness results  and positive solution for dense subgraph discovery have been studied extensively in theory~\citep{karp1972reducibility,charikar2000greedy,andersen2009finding, lee2010survey,pattillo2013maximum}.  The size
of the graph involved could be very large, so having a fast algorithm for finding
an approximately dense subgraph is extremely useful.
The problem of finding the densest subgraph in a graph without size restriction can be solved in polynomial time.
However when there is a size
constraint specified, namely finding the densest subgraph of exactly $k$ vertices (DkS), the
densest $k$-subgraph problem becomes NP-hard~\citep{feige2001thedense,asahiro2002complexity}. The Generalized objective functions for dense subgraph was first introduced in~\citet{veldt2021generalized}.  It was shown that standard densest subgraph problem is obtained as special case when  $p = 1$.
In a word, the new generalized density objective functions unify a number of previous definitions.

It was shown in~\citet{veldt2021generalized} that  objective function is polynomial-time solvable when $p \geq 1$, by  repeatedly calling submodular minimization.  The computational complexity for case  $p<1$ is still unknown.
The standard simple peeling algorithm (SIMPEEL) to obtain approximation is  repeatedly removing a single vertex at a time in order to shrink a graph down into a denser subgraph.
\citet{charikar2000greedy} showed that yields 2-approximation by  iteratively removing the node with smallest degree in current graph. However, despite seeming like a natural approach, that well-known standard peeling algorithm, which  provides a $2$-approximation for the $p = 1$ objective~\citep{charikar2000greedy,khuller2009finding}, can output arbitrarily bad results when $p > 1$.
In this paper, we show that SIMPEEL still can yield $2^{1/p}$-approximation for the case $0<p < 1$.
In order to solve the case for $p>1$, a more sophisticated but still fast peeling algorithm (GENPEEL) with a $(1+p)^{1/p}$ approximation guarantee when $p\geq 1$ was proposed by~\citet{veldt2021generalized}. They also presented the fact  that GENPEEL can outperform SIMPEEL in finding dense subgraphs.  For example, on many real-world graphs, they found that running GENPEEL with a value of $p$ slightly larger than 1 will typically produce sets with a better average degree than SIMPEEL.

\newpage

The main difference between SIMPEEL and GENPEEL is the following. SIMPEEL only considers the affection of its own degree when removing a node.   Concerning the contribution to the generalized objective, GENPEEL  considers  the  degree of both itself  and  its  neighbors in the graph.
That kind of   ``foresight'' when removing nodes,  is not presented in the strategy of the simple peeling algorithm.
However, that ``foresight'' needs extra time to obtain and update, as the removal of a vertex  $v$ not only affects the ``foresight'' of its neighbors, but also its second neighbors (the vertex set with two hops to $v$).  Hence GENPEEL is a bit slower than SIMPEEL, and it has $O(mn)$  time complexity since ``foresight'' was update when removing each node.
In this paper, we found that ``foresight'' is unnecessary to update
in each step,  only roughly $O(\log n)$ times enough to obtain
a good approximation. Based on that observation, we propose the ``GENPEEL++'' for $p\geq 1$ in this paper,  which has only $O(m\log n)$ time complexity.

We compare  our proposed GENPEEL++ with  the previous GENPEEL~\citep{veldt2021generalized} on a range of different sized graphs from various domains, including social networks, road networks, citation networks, and web networks. We also show that SIMPEEL yields constant approximation  when $0<p<1$. Equipped with close approximation to GENPEEL, our proposed GENPEEL++ runs much faster than GENPEEL in large datasets, also uncovers different meaningful notions of dense subgraphs varying parameter $p$.

\vspace{0.1in}
In summary, in this paper we present the following \textbf{contributions}:
\begin{itemize}
	\item We revisit the generalized mean densest subgraph problem and show that \textsc{SIMPEEL} still can yield $2^{1/p}$-approximation for case $0<p < 1$. Hence, for $p<1$ we theoretically resolve the open problem raised in~\citet{veldt2021generalized}.

	\item we propose a faster
generalized peeling algorithm (named \textsc{GENPEEL++} in this paper), which for $p \in [1, +\infty)$ has an approximation guarantee ratio
$(\frac{p+1}{1-c})^{1/p}$, and time complexity  $O(m(\log n) / (\log \frac{1}{1-c})$  for any constant $c$ ($0<c<1$). It improves the previous $O(mn)$ time algorithm, where $m$ and $n$ denote the number of edges and nodes in graph respectively.
That  converges to 1 as $p \rightarrow \infty$.

	\item In the experiments, we  compare the performance of different peeling algorithms including \textsc{SIMPEEL}, \textsc{GENPEEL}, \textsc{GENPEEL++}.
	For thorough comparison, we conducted experiments on all datasets reported in~\citet{veldt2021generalized}.  The improvements are consistent for both $p<1$ and $p>1$
	(Note that the experiment  in~\citet{veldt2021generalized}  runs \textsc{GENPEEL}
	for both  $p<1$ and  $p>1$).
	It is shown in the experiments, the approximation errors of \textsc{GENPEEL++}  do not differ much compared  to \textsc{GENPEEL},
	but \textsc{GENPEEL++}  performs significantly faster (10x times faster)   in large real datasets,  coming from numerous domains.
\end{itemize}
%

\section{Technical Preliminaries\label{sec:pre}}

We follow  some notions  in~\citet{veldt2021generalized}. For a given undirected graph $G = (V,E)$ and each vertex $v\in V$, let $N(v) = \{u \in V \colon (u,v) \in E\}$ denote the neighborhood of node $v$, and $d_v = |N(v)|$ be its degree. Note that $v\notin N(v)$.
For an  arbitrary  set $S \subset V$, let $E(S)$ denote the set of edges between all pairs of  nodes in $S$ and $d_v(S) = |N(v) \cap S|$ denote  the degree of $v$ in the subgraph induced by $S$. Hence, we have   $d_v(S) = 0$ if $v\notin S$.

\paragraph{Dense Subgraph Problems}

The densest subgraph problem seeks a subgraph $S\subseteq V$ maximizing the density $f(S)$, defined as the ratio between the number of edges and nodes:
\begin{equation}
\max_{S \subseteq V} f(S)= \frac{2|E(S)|}{|S|}=\frac{\sum_{v \in S} d_v(S)}{|S|}.
\end{equation}
We just let $f(S)=0$ when $|S|=0$.
This problem is known to have a polynomial-time solution~\citep{goldberg1984finding}, as well as a fast greedy peeling algorithm that is guaranteed to return a $2$-approximation~\citep{charikar2000greedy,khuller2009finding}.

 A new generalized  dense subgraph objective based on generalized means of degree sequences was introduced by~\citet{veldt2021generalized}.
They extended  the density function to $p$-th-power  as
\begin{equation}
\label{fp}
{f_p(S) = \sum_{v \in S} \frac{d_v(S)^p}{|S|}.}
\end{equation}
The $p$-density of $S \subseteq V$ is
\begin{equation}
\label{subgraphpmean}
{ M_p(S) = f_p^{1/p}(S),p\neq 0.}
\end{equation}
The \emph{$p$-mean  densest subgraph problem} is then to find a set of nodes $S$ that maximizes $M_p(S)$.
For finite $p > 0$, maximizing $M_p(S)$ is equivalent to maximizing $[M_p(S)]^p$.

\section{Structure Property of the Optimal Solution}

Consider a given graph $G(V,E)$, and some   subgraph  $T$ that maximizes $M_p(S)$.
A natural question is that: can the degree distribution of vertices in $T$  be arbitrary? In this paper, our investigation  begins with a simple case when $p=1$.

\begin{lemma}\label{lem:p=1}
For a given graph $G(V,E)$ and   a densest subgraph   $T=\arg\max_{S \subseteq V} f(S)$, we have
 $$d_u(T) \geq   f(T)/2, \forall u \in T$$
\end{lemma}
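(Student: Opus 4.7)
The plan is to argue by contradiction, exploiting the definition of the densest subgraph directly. Suppose there exists some vertex $u \in T$ with $d_u(T) < f(T)/2$. Let $T' = T \setminus \{u\}$; since $|E(T')| = |E(T)| - d_u(T)$ and $|T'| = |T| - 1$, I would write
\[
f(T') \;=\; \frac{2\bigl(|E(T)| - d_u(T)\bigr)}{|T| - 1}.
\]
The goal is then to show $f(T') > f(T)$, contradicting the assumption that $T$ maximizes $f$.

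To compare $f(T')$ and $f(T) = 2|E(T)|/|T|$, I would cross-multiply by the positive quantities $|T|$ and $|T|-1$ (assuming $|T|\geq 2$; the case $|T|=1$ is degenerate since then $f(T)=0$ and the inequality $d_u(T)\geq 0$ holds trivially), which reduces the desired inequality $f(T')>f(T)$ to
\[
|T|\cdot\bigl(|E(T)| - d_u(T)\bigr) \;>\; (|T|-1)\cdot |E(T)|,
\]
i.e., $|E(T)| > |T|\cdot d_u(T)$, i.e., $d_u(T) < |E(T)|/|T| = f(T)/2$. This is exactly the hypothesis, so the contradiction is immediate.

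The only subtle point is the direction of the inequality (strict vs.\ non-strict) and the edge case $|T|=1$; both are minor. Essentially, the argument reflects the standard greedy intuition used by \citet{charikar2000greedy}: a vertex whose degree inside the densest subgraph falls strictly below half the density ``drags down'' the average, so deleting it must strictly improve $f$, which is impossible at an optimum. There is no real obstacle beyond careful bookkeeping of the edge and vertex counts when removing $u$.
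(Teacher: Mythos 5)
Your proof is correct and follows essentially the same route as the paper: both argue by contradiction that a vertex with $d_u(T) < f(T)/2$ could be removed to strictly increase the density, with your version merely making the edge-count bookkeeping $2|E(T\setminus\{u\})| = f(T)|T| - 2d_u(T)$ explicit before cross-multiplying. No gaps.
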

\begin{proof}
The cases with  $|T|=0$ or $1$ are obvious. For $|T|>1$, we proceed  the proof by contradiction. Suppose some node $u$ in $T$ satisfies that
$$d_u(T) <   f(T)/2$$
We have $$f(T\setminus\{u\})=\frac{f(T)|T|-2d_u(T)}{|T|-1} > \frac{f(T)|T|-f(T)}{|T|-1}.$$
Then we have
$$f(T\setminus\{u\}) > f(T).$$
This means $T\setminus\{u\}$ is a  more dense subgraph instead of $T$.
That is a contradiction and completes the proof.
\end{proof}

\begin{definition} [\citet{veldt2021generalized}]
For a graph $G(V,E)$, some node  subset $S\subseteq V$, and arbitrary node $v \in S$, the   loss/decrease value of object $f_p(S)$ resulting from removing $v$ is defined as follow:
\begin{equation}
\label{Delta}
\Delta_v(S) = d_v(S)^p + \sum_{u \in N(v) \cap S} d_{u}(S)^p - [d_u(S) - 1]^p.
\end{equation}
\end{definition}
We now consider the property of
 the loss value.
\begin{lemma} \label{lem:p>0}
For a given graph $G(V,E)$ and $p>0$, and    subgraph   $T=\arg\max_{S\subseteq V} f(S)$, we have
 $$\Delta_v(T) \geq   f_p(T), \forall v \in T$$
\end{lemma}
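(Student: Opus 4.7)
The plan is to mirror the contradiction argument from Lemma~\ref{lem:p=1}, generalized to the $p$-th-power objective. First, I read the statement as asserting optimality with respect to $f_p$ (since the loss $\Delta_v$ is defined relative to $f_p$; the bare $f$ appears to be a typo). The dispose-of-trivial-case step handles $|T|\le 1$ directly, and for $|T|>1$ I suppose for contradiction that some $v\in T$ satisfies $\Delta_v(T) < f_p(T)$, aiming to show that $T\setminus\{v\}$ strictly improves the $p$-mean objective.

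The key step is a clean accounting of how removing $v$ changes $f_p(S)\cdot|S|$. Writing $f_p(T)\,|T| = \sum_{u\in T} d_u(T)^p$, I compare term by term with $f_p(T\setminus\{v\})\,(|T|-1)$: the term $d_v(T)^p$ disappears entirely, each neighbor $u\in N(v)\cap T$ sees its contribution drop from $d_u(T)^p$ to $(d_u(T)-1)^p$, and every other vertex's term is unchanged. This is exactly the sum that defines $\Delta_v(T)$ in~\eqref{Delta}, so I get the identity
\begin{equation*}
f_p(T\setminus\{v\})\,(|T|-1) \;=\; f_p(T)\,|T| \;-\; \Delta_v(T).
\end{equation*}

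From here the contradiction is immediate: under the hypothesis $\Delta_v(T) < f_p(T)$, the right-hand side exceeds $f_p(T)\,|T| - f_p(T) = f_p(T)\,(|T|-1)$, so $f_p(T\setminus\{v\}) > f_p(T)$, which contradicts the optimality of $T$. Since $M_p$ and $f_p$ have the same argmax for $p>0$, maximizing either objective suffices for the argument.

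I expect no serious obstacle; the only subtlety is being careful that the ``degree'' updates in $\Delta_v(T)$ track only neighbors in $T$ (so $d_v(S)$ is always the induced degree), and that the $|T|-1$ denominator is handled correctly. Compared to Lemma~\ref{lem:p=1}, the bookkeeping is slightly heavier because $p$-th powers do not simplify via the ``$2|E|$'' identity, but the decomposition above replaces that identity cleanly.
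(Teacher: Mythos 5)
Your proposal is correct and follows essentially the same route as the paper: a contradiction argument built on the identity $f_p(T\setminus\{v\})\,(|T|-1)=f_p(T)\,|T|-\Delta_v(T)$, which is exactly the first displayed equation in the paper's proof. Your reading of the statement's $\arg\max f(S)$ as a typo for $\arg\max f_p(S)$ is also the intended interpretation, consistent with how the lemma is used later.
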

\begin{proof}
The cases when  $|T|=0$ or $1$ are obvious. We  consider the case $|T|>1$ below.

Proof by contradiction. Suppose some node $v$ in $T$ satisfies
$\Delta_v(T) <  f_p(T)$, we have $$f_p(T\setminus\{v\})=\frac{f_p(T)|T|-\Delta_v(T)}{|T|-1} > \frac{f_p(T)|T|-f_p(T)}{|T|-1}.$$
Then we have
$$f_p(T\setminus\{v\}) > f_p(T).$$
This means $T\setminus\{v\}$ is a  more dense subgraph instead of $T$.
The proof completes by contradiction.
\end{proof}

 \newtheorem{fact}[theorem]{Fact}
Below, we consider structure property of the  densest subgraph  $T=\argmax_{S\subseteq V} f_p(S)$.

\begin{fact}
\label{lem:property}
For a given graph $G(V,E)$  and some   specific subgraph  $S\subseteq V $ $(|S|\geq 1)$, and $f_p(S)=Z^p$ for some positive  $Z$.
We have $Z^p=f_p(S)\leq  f(S)$ when $0<p\leq 1$, since function $f(x)=x^p$ is monotone decreasing when $0<p\leq 1$.
\end{fact}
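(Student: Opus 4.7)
The plan is to reduce the inequality $f_p(S) \le f(S)$ to a \emph{termwise} comparison on each vertex. From the definition \eqref{fp}, both $f_p(S)$ and $f(S) = f_1(S)$ are averages over $v \in S$ normalized by the common factor $|S|$, so it suffices to prove
\[
\sum_{v \in S} d_v(S)^p \;\le\; \sum_{v \in S} d_v(S),
\]
and in fact to verify the pointwise inequality $d_v(S)^p \le d_v(S)$ for every $v \in S$. Summing and dividing by $|S|$ then gives $f_p(S) \le f(S)$, which is exactly the desired $Z^p \le f(S)$.

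The key (and only) structural input is that $d_v(S)$ is always a nonnegative integer. If $d_v(S) = 0$, both sides vanish and there is nothing to show. If $d_v(S) \ge 1$, then since $0 < p \le 1$ we have $p - 1 \le 0$, so $d_v(S)^{p-1} \le 1^{\,p-1} = 1$; multiplying both sides by the positive quantity $d_v(S)$ gives $d_v(S)^p \le d_v(S)$. The relevant analytic fact here is the inequality $x^p \le x$ on $[1,\infty)$ for $0 < p \le 1$ (equivalently, that $x \mapsto x^{p-1}$ is nonincreasing on $(0,\infty)$), rather than the monotone-decreasingness of $x^p$ claimed in the hint; I would reword the justification accordingly, since $x^p$ is actually monotone increasing for all $p > 0$.

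I do not anticipate any genuine obstacle. The statement amounts to the elementary inequality $x^p \le x$ on $\{0\} \cup [1,\infty)$ applied termwise, and the integrality of graph degrees is precisely what excludes the problematic range $x \in (0,1)$ on which the inequality reverses. The only thing to be careful about is not to overstate the hypothesis: no use is made of $S$ being optimal for any objective, so the Fact holds for \emph{arbitrary} $S$ with $|S|\ge 1$, as claimed.
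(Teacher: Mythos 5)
Your proof is correct and follows the argument the paper intends: a termwise comparison $d_v(S)^p \le d_v(S)$, valid because each degree is a nonnegative integer (hence $0$ or at least $1$) and $x^p \le x$ on $[1,\infty)$ for $0<p\le 1$. You are also right that the paper's stated justification is misworded --- $x\mapsto x^p$ is increasing for $p>0$; the relevant fact is that $x\mapsto x^{p-1}$ is nonincreasing (equivalently $x^p\le x$ for $x\ge 1$) --- so your rewording of the justification is an improvement, not a deviation.
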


\begin{lemma}\label{lem:delta}
For a given graph $G(V,E)$  and  a densest  subgraph   $T=\arg\max_{S\subseteq V} f_p(S)$. Let $v=\arg\min_{u\in T} d_u(T)$,  we have
   $\Delta_v(T) \leq  2(d_v(T))^p$ when $0<p\leq 1$.
\end{lemma}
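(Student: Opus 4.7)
\textbf{Proof Plan for Lemma~\ref{lem:delta}.}

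My plan is to directly upper-bound each of the two types of terms in the expression~\eqref{Delta} for $\Delta_v(T)$. First, I would write
\[
\Delta_v(T) \;=\; d_v(T)^p \;+\; \sum_{u \in N(v)\cap T} \bigl[d_u(T)^p - (d_u(T)-1)^p\bigr],
\]
so that it suffices to show the sum on the right is at most $d_v(T)^p$. The trivial case $d_v(T)=0$ makes both sides zero, so I would assume $d_v(T)\geq 1$.

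Next I would exploit two ingredients. (i) Since $v$ minimizes the degree in $T$, every neighbor $u\in N(v)\cap T$ satisfies $d_u(T)\geq d_v(T)$. (ii) For $0<p\leq 1$ the function $x\mapsto x^p$ is concave, hence the discrete difference $g(x)=x^p-(x-1)^p$ is nonincreasing in $x$ on $[1,\infty)$. Combining (i) and (ii), each summand is bounded by $g(d_v(T))=d_v(T)^p-(d_v(T)-1)^p$. Since the sum has exactly $|N(v)\cap T| = d_v(T)$ terms, I would obtain
\[
\sum_{u \in N(v)\cap T} \bigl[d_u(T)^p - (d_u(T)-1)^p\bigr] \;\leq\; d_v(T)\cdot\bigl[d_v(T)^p-(d_v(T)-1)^p\bigr].
\]

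The remaining step, which I view as the only nontrivial piece, is to establish the scalar inequality
\[
d\bigl[d^p-(d-1)^p\bigr]\;\leq\; d^p \qquad \text{for integer } d\geq 1,\ 0<p\leq 1.
\]
This is equivalent to $(d-1)^p\geq d^p-d^{p-1}=d^{p-1}(d-1)$, which when $d>1$ reduces to $(d-1)^{p-1}\geq d^{p-1}$; this holds because $x\mapsto x^{p-1}$ is nonincreasing for $p\leq 1$. The case $d=1$ is a direct check: $1\cdot(1-0)=1=1^p$. Plugging this bound back in gives $\sum_{u\in N(v)\cap T}[d_u(T)^p-(d_u(T)-1)^p]\leq d_v(T)^p$, and therefore $\Delta_v(T)\leq 2\,d_v(T)^p$, as desired.

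The main obstacle is really just the monotonicity argument in step (ii) together with the final scalar inequality; everything else is bookkeeping. Both of these rely essentially on $0<p\leq 1$, which is why the statement is restricted to this range (for $p>1$, $x^p$ becomes convex and $g(x)$ is nondecreasing, so the argument fails, consistent with the fact that \textsc{SimPeel} can behave arbitrarily poorly for $p>1$).
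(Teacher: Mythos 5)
Your proof is correct, and at the key step it takes a genuinely different route from the paper's. Both arguments start from the same decomposition $\Delta_v(T) = d_v(T)^p + \sum_{u\in N(v)\cap T}\bigl[d_u(T)^p-(d_u(T)-1)^p\bigr]$ and reduce the task to bounding the sum by $d_v(T)^p$, but they bound the individual increments differently. The paper multiplies each increment by the conjugate factor $\bigl(d_u(T)^{1-p}+(d_u(T)-1)^{1-p}\bigr)/\bigl(d_u(T)^{1-p}+(d_u(T)-1)^{1-p}\bigr)$ and claims the resulting numerator $1+d_u(T)^p(d_u(T)-1)^{1-p}-(d_u(T)-1)^p d_u(T)^{1-p}$ is at most $1$, arriving at the per-term bound $d_u(T)^p-(d_u(T)-1)^p \le d_v(T)^{p-1}$; you instead use concavity of $x\mapsto x^p$ to dominate every increment by the one at $x=d_v(T)$ and then verify the single scalar inequality $d\bigl[d^p-(d-1)^p\bigr]\le d^p$, which is the same bound $d^p-(d-1)^p\le d^{p-1}$ evaluated only at $d=d_v(T)$. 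Your route buys two things. It is more elementary, using only that $x^{p-1}$ is nonincreasing and that increments of a concave function decrease. It is also more robust: the paper's claim that the conjugate numerator is at most $1$ amounts to $d_u(T)^{2p-1}\le (d_u(T)-1)^{2p-1}$, which holds only for $p\le 1/2$ (at $p=1$, $d_u(T)=2$ the numerator equals $2$), and the strict inequality in the paper's chain is incompatible with the exact identity $\Delta_v(T)=2d_v(T)$ at $p=1$; your argument covers all of $0<p\le 1$ including that equality case. Your handling of the boundary cases ($d_v(T)=0$ and $d=1$, using $0^p=0$) is also complete, so there is no gap to report.
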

\begin{proof}
The cases when  $|T|=0$ or $1$ is easy. We  consider the case $|T|>1$ below.
 $\Delta_v(T) = d_v(T)^p + \sum_{u \in N(v) \cap T} (d_{u}(T)^p - [d_u(T) - 1]^p).$ Which is
 \begin{align}\notag
 &=d_v(T)^p \\\notag
 &+ \sum_{u \in N(v) \cap T} \frac{(d_{u}(T)^p - [d_u(T) - 1]^p)(d_{u}(T)^{1-p} + [d_u(T) - 1]^{1-p})}{d_{u}(T)^{1-p} + [d_u(T) - 1]^{1-p}}\\ \notag
 &=d_v(T)^p +\\ \notag
 &\sum_{u \in N(v) \cap T} \frac{1+d_u(T)^p(d_u(T)-1)^{1-p}-(d_u(T)-1)^{p}d_u(T)^{1-p}}{d_{u}(T)^{1-p} + [d_u(T) - 1]^{1-p}}\\ \notag
 &\leq
 d_v(T)^p + \sum_{u \in N(v) \cap T} \frac{1}{d_{u}(T)^{1-p} + [d_u(T) - 1]^{1-p}}\\\notag
  &<
 d_v(T)^p + \sum_{u \in N(v) \cap T} \frac{1}{d_{u}(T)^{1-p}}\\\notag
 &\leq
 d_v(T)^p + \sum_{u \in N(v) \cap T} \frac{1}{d_{v}(T)^{1-p}}\\\notag
&=
 d_v(T)^p + d_v(T)^{p}=2d_v(T)^{p}
 \end{align}
\end{proof}

\noindent Based on the Lemma above, we show the bound between  $d_v(T)$~and~$f_p(S)$.
\begin{lemma}\label{lem:bound}
For a given graph $G(V,E)$ and $0<p\leq 1$, and    subgraph  $ T=\argmax_{S\subseteq V}f_p(S)$. We have
 $$d_v(T)^p \geq   \frac{f_p(T)}{ 2}, \forall v \in T$$
\end{lemma}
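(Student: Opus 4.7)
The plan is to combine the two lemmas already established, Lemma~\ref{lem:p>0} and Lemma~\ref{lem:delta}, by evaluating both bounds at the vertex of minimum degree in $T$.

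First, I would let $v^{*} = \arg\min_{u \in T} d_u(T)$ and handle the trivial cases $|T| \leq 1$ separately; for $|T| > 1$, the nontrivial content of the statement applies. Applying Lemma~\ref{lem:p>0} to $T$ at the node $v^{*}$ gives the lower bound
\[
\Delta_{v^{*}}(T) \;\geq\; f_p(T),
\]
which uses only that $T$ is a maximizer of $f_p$ over subsets of $V$. Applying Lemma~\ref{lem:delta}, whose hypothesis is exactly that $v^{*}$ realizes the minimum of $d_u(T)$ and that $0 < p \leq 1$, gives the upper bound
\[
\Delta_{v^{*}}(T) \;\leq\; 2\,d_{v^{*}}(T)^{p}.
\]
Chaining these two inequalities immediately yields $d_{v^{*}}(T)^{p} \geq f_p(T)/2$, which is the desired bound at the minimum-degree vertex.

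To extend the bound from $v^{*}$ to every vertex $v \in T$, I would invoke the monotonicity of $x \mapsto x^{p}$ on $(0, \infty)$ for $p > 0$: since $d_{v^{*}}(T) \leq d_v(T)$ for all $v \in T$ by the choice of $v^{*}$, we get $d_v(T)^{p} \geq d_{v^{*}}(T)^{p} \geq f_p(T)/2$. (If $d_{v^{*}}(T) = 0$, then $v^{*}$ is an isolated vertex of the subgraph induced by $T$, which one can rule out directly by observing that removing it strictly increases $f_p$, contradicting the optimality of $T$.)

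There is no real obstacle here; the work is already packaged inside Lemma~\ref{lem:delta}, where the estimate $\Delta_{v^{*}}(T) \leq 2 d_{v^{*}}(T)^{p}$ required the careful telescoping calculation using $0 < p \leq 1$. The present lemma is essentially the clean combinatorial consequence of pairing that upper bound with the optimality-based lower bound $\Delta_{v^{*}}(T) \geq f_p(T)$, so the proof reduces to two lines once the previous lemmas are in hand.
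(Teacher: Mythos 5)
Your proof is correct and is essentially the paper's argument: both hinge on the key estimate $\Delta_{v^*}(T)\le 2\,d_{v^*}(T)^p$ from Lemma~\ref{lem:delta} combined with the optimality-based lower bound on $\Delta_{v^*}(T)$, the only difference being that the paper inlines the latter as a proof by contradiction (re-deriving the content of Lemma~\ref{lem:p>0}) while you cite that lemma directly and chain the inequalities. Your explicit handling of the extension from the minimum-degree vertex to all $v\in T$ via monotonicity of $x\mapsto x^p$ is a small but welcome addition that the paper leaves implicit.
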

\begin{proof}
Proof by contradiction.
Let $f_p(T)=Z^p$.
Suppose   $u=\arg\min_{v\in T} d_v(T)$ in $T$ satisfies that
$$d_u(T)^p < \frac{Z^p}{2}  $$
We have $$f_p(T\setminus\{u\})=(|T|Z^{p}-\Delta_u(T))/(|T|-1)$$
Based on Lemma~\ref{lem:delta}, we have
\begin{align*}
   |T|Z^{p}-\Delta_u(T)& \geq |T|Z^{p}-2d_u(T)^p\\
   &\geq (|T|-1)Z^{p}+ Z^p-2d_u(T)^p\\
   &>(|T|-1)Z^{p}
\end{align*}
Then we have
$$f_p(T\setminus\{u\}) > f_p(T).$$
That is a contradiction. The Lemma is proved.
\end{proof}

 Based on Lemma above, we find that $d_v(T)^p \geq   \frac{f_p(S)}{ 2}, \forall v \in T   $, where $ T=\argmax_{S\subseteq V}f_p(S)$. This can be used to guide the design of approximation of the case $0<p \leq 1$.

\section{Algorithms}

We now present our proposed faster algorithm \textsc{GENPEEL++} for the $p$-mean densest subgraph problem.  For $p>1$, it was shown that  \textsc{SIMPEEL}~\citep{asahiro2000greedily,charikar2000greedy,khuller2009finding} can return arbitrarily bad results by~\citet{veldt2021generalized}.
Perhaps surprisingly, in this paper we show that \textsc{SIMPEEL} algorithm  can still yield constant approximation for case $0<p<1$.

\subsection{Success  of the Standard Peeling Algorithm for $0<p<1$}

The standard peeling algorithm for  densest subgraph problem  starts with the entire graph $G(V,E)$ and repeatedly removes the minimum degree node until no more node remains. We  refer to this algorithm generically as \textsc{SIMPEEL}, which produces a set of $n$ nested subgraphs $S_1 \supset  S_2 \supset \hdots, \supset S_n$, one of which is guaranteed to provide at least a $2$-approximation to the standard densest subgraph problem~\citep{charikar2000greedy}.

Given the success of this procedure for $p=1$, it is natural to wonder whether it can be used to obtain optimal or near optimal solutions for other values like $p<1$. Perhaps surprisingly, we are able to show that the simple peeling algorithm  performs not too bad  for any $0<p < 1$.

\begin{lemma}
	\label{lem:success}
	Let $0<p < 1$  be  a fixed constant. Applying \textsc{SIMPEEL} on  $G(V,E)$ will yield a $2$-approximation  for the $p$-th-power degree objective and a $2^{1/p}$-approximation  for the p-density objective.
\end{lemma}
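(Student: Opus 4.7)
The plan is to exploit the structural lemma just proved (Lemma~\ref{lem:bound}), which guarantees that every vertex in the optimal subgraph $T=\arg\max_{S\subseteq V} f_p(S)$ has $p$-th power degree at least $f_p(T)/2$, i.e.\ $d_v(T)\geq (f_p(T)/2)^{1/p}$ for every $v\in T$. This gives a uniform lower bound on the minimum degree inside $T$, which is exactly the kind of hypothesis that drives Charikar's analysis of the standard peeling algorithm. I will mimic that classical argument with $d_v(T)^p$ playing the role of degree.

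First I would consider the sequence of nested subgraphs $S_1\supset S_2\supset\cdots\supset S_n$ produced by \textsc{SIMPEEL}, and locate the first index $i$ at which the vertex removed from $S_i$ actually belongs to $T$. By the choice of $i$, we have $T\subseteq S_i$, so for the removed vertex $v\in T$ the monotonicity of induced degrees gives $d_v(S_i)\geq d_v(T)$. Since $v$ is the minimum-degree vertex of $S_i$, every $u\in S_i$ satisfies
\begin{equation}
d_u(S_i)\;\geq\; d_v(S_i)\;\geq\; d_v(T)\;\geq\;\left(\tfrac{f_p(T)}{2}\right)^{1/p},
\end{equation}
where the last inequality is Lemma~\ref{lem:bound}.

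Next I would raise this to the $p$-th power and average over $S_i$. Because $d_u(S_i)^p\geq f_p(T)/2$ for every $u\in S_i$,
\begin{equation}
f_p(S_i)\;=\;\frac{1}{|S_i|}\sum_{u\in S_i} d_u(S_i)^p\;\geq\;\frac{f_p(T)}{2}.
\end{equation}
Since $S_i$ is among the candidates output by \textsc{SIMPEEL}, this proves the $2$-approximation for the $p$-th power degree objective $f_p$. Taking $1/p$-th powers and using $M_p(S)=f_p(S)^{1/p}$ converts this into $M_p(S_i)\geq M_p(T)/2^{1/p}$, giving the $2^{1/p}$-approximation for the $p$-density objective.

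The only subtle point, and the step I would check most carefully, is the very first inequality $d_v(S_i)\geq d_v(T)$: it relies on choosing $i$ as the \emph{first} iteration whose removed vertex is in $T$, so that $T\subseteq S_i$ still holds at that moment (no $T$-vertex has been deleted yet). Everything else is routine: the rest of the proof is just the uniform minimum-degree bound from Lemma~\ref{lem:bound} combined with the greedy invariant of \textsc{SIMPEEL}. I do not expect any real obstacle; the main work was already done in establishing Lemma~\ref{lem:bound}, and this lemma is essentially its corollary via Charikar-style peeling analysis.
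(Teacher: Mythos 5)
Your proposal is correct and follows essentially the same route as the paper's own proof: both identify the set $S$ held by \textsc{SIMPEEL} just before the first vertex $v$ of the optimal set $T$ is removed, use $T\subseteq S$ and the greedy minimum-degree choice to get $d_u(S)\geq d_v(S)\geq d_v(T)$ for all $u\in S$, and then invoke Lemma~\ref{lem:bound} to conclude $f_p(S)\geq d_v(T)^p\geq f_p(T)/2$, from which the $2^{1/p}$ bound on $M_p$ follows by taking $1/p$-th powers. No gaps; the argument matches the paper's.
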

\begin{proof}

Consider a given graph $G(V,E)$ and $0<p\leq 1$, and some   subgraph  $$T=\arg\max_{S\subseteq V} f_p(S).$$
Define  $f_p(T)=Z^p$.
	Let $S$ denote the set maintained by the greedy algorithm right before the first node $v \in T$ is removed by \textsc{SIMPEEL}. Since $v$ is the first node to be removed by the peeling algorithm, we know that $S \supseteq T$ and $d_v(T) \leq d_v(S)$. Because $v = \argmin_{u \in S}d_u(S)$,
	we have $$d_u(S)\geq d_v(S) \geq d_v(T)$$
	Now let us compute $f_p(S)$, which is
		\begin{align*}
	f_p(S)=&\frac{\sum_{u \in S} d_u(S)^p}{|S|}\\
	&\geq  d_v(S)^p\\
	& \geq d_v(T)^p
	\end{align*}
Define $f_p(T)=Z^p$, we have  $d_v(T)^p\geq Z^p/2$ based on Lemma~\ref{lem:bound}. Then we have
\begin{align*}
	f_p(S)\geq d_v(T)^p \geq Z^p/2,
	\end{align*}
	and hence we have  $\frac{f_p(T)}{f_p(S)}\leq 2$.
	Since
\begin{equation*}
 M_p(S)=(f_p(S))^{1/p},
\end{equation*}
we know that the approximation ratio is $\frac{M_p(T)}{M_p(S)}= (\frac{f_p(T)}{f_p(S)})^{1/p}\leq (2)^{1/p}=2^{1/p}$.
\end{proof}

\subsection{Faster Peeling Algorithm when $p \geq 1$}

Below we provide the algorithm workflow of the standard simple peeling algorithm.

\begin{algorithm}[H]
	\caption{Standard Simple Peeling Algorithm (\textsc{SIMPEEL})~\citep{charikar2000greedy}}
	\label{alg:SIMPEEL}
	\begin{algorithmic}
		\State \textbf{Input}: $G = (V,E)$, parameter $0<p \leq  1$
		\State \textbf{Output}: Set $S' \subseteq V$.
		\State $S_0 := V$
		\For{$i := 1$ to $n$}
		
		\State $\ell := \argmin_v d_v(S_{i-1})$
		\State $S_i := S_{i-1}\backslash\{\ell\}$
		
		\EndFor
		\State Return $S':=\max_i f_p(S_i)$.
	\end{algorithmic}
\end{algorithm}
In order to make the comparison between Simple Peeling Algorithm and Generalized Peeling Algorithm more clear, we include Algorithm~\ref{alg:genpeel}~\citep{veldt2021generalized} here for completeness.

\begin{algorithm}[H]
	\caption{Generalized Peeling Algorithm (\textsc{GenPeel})~\citep{veldt2021generalized}}
	\label{alg:genpeel}
	\begin{algorithmic}
		\State \textbf{Input}: $G = (V,E)$, parameter $p \geq 1$
		\State \textbf{Output}: Set $S' \subseteq V$, satisfying $f_p(S') \geq \frac{1}{p+1} \max_S f_p(S)$.
		\State $S_0 := V$
		\For{$i := 1$ to $n$}
		\State $\ell := \argmin_j \Delta_j(S_{i-1})$
		\State $S_i := S_{i-1}\backslash\{\ell\}$
		\State Update $\Delta_j(S_{i})$
		\EndFor
		\State Return $\max_i f_p(S_i)$.
	\end{algorithmic}
\end{algorithm}

Algorithm~\ref{alg:SIMPEEL} could not be extended to case  $p > 1$,
because the property in Lemma~\ref{lem:bound} could not be extended to the case $p>1$. In order to solve the case for $p>1$,  a more sophisticated but still fast peeling algorithm (GENPEEL) with a $(1+p)^{1/p}$ approximation guarantee when $p\geq 1$ was proposed by~\citet{veldt2021generalized}.

In  each step of GENPEEL, one pick the vertex $\ell=\arg\min_{j\in S}\Delta_j(S) $ where $S$ is the current graph remains. Also  the value of  $\Delta_j(S) $ is dynamically updated in each step of
GENPEEL.

\begin{algorithm}[t]
	\caption{Faster Generalized Peeling Algorithm (\textsc{GENPEEL++})}
	\label{alg:GENPEEL++}
	\begin{algorithmic}
		\State \textbf{Input}: $G = (V,E)$, parameter $p \geq 1$
		\State \textbf{Output}: Set $S' \subseteq V$, satisfying $f_p(S') \geq \frac{1}{2(p+1)} \max_{T\subseteq V} f_p(T)$.
		\State $S:= V$, $I :=0,  S_0:=V $\;
		
		\While{$|S|>0$}
		
		\State Compute the $\Delta_j(S)$ for the  vertices in $S$\;
		\State  Sort the current vertex set in $S$ based on $\Delta_j(S)$ as $v_1,v_2,...,v_{|S|}$ from small to large\;

		\For{$i := 1$ to $|S|/2$}
		\State $S_{I+i} := S_{I+i-1}\backslash\{v_i\}$\;
		
		\EndFor
			\State $I:=I+|S|/2$\;
		\State $S:=S\setminus \bigcup_{1 \leq i\leq |S/2|} \{v_i\}$ \;
	
		\EndWhile
		\State Return $\max_I f_p(S_I)$ when $I:=1,2,3,... n$.
	\end{algorithmic}
\end{algorithm}

We put the  Algorithm~\ref{alg:genpeel}  proposed by~\citet{veldt2021generalized} here just for comparison. As discussed in the introduction, the Algorithm~\ref{alg:genpeel} captures the ``insight'', while the only disadvantage is its efficiency.
In this paper, we found that value set of $\Delta_j(S)$ was not necessary to update
in each step,  only roughly $O(\log n)$ times  enough to obtain
a constant approximation. Based on that observation, we propose the ``GENPEEL++'' for $p\geq 1$ in this paper,  which has only $O(m\log n)$ time complexity. More details  is given in
Algorithm~\ref{alg:GENPEEL++}.
The  Algorithm~\ref{alg:GENPEEL++} repeat steps below until at most one node left:
 (i) Compute (Recompute) the $\Delta_j(S)$ for the  all the vertices of current graph $S$. (ii) Sort the current vertex set in $S$ based on $\Delta_j(S)$ as $v_1,v_2,...,v_i,...,$\; (iii) Remove the first half $(1/2)$ of vertices of $S$ based on that order.  Note that ``half'' can be replaced by any constant $c$ where $0 <c< 1$. The number of iterations is bounded by $O(\log n)$.
Once the nested subgraphs are obtained, we try those
subgraphs to find a good approximation solution like previous reference.

\begin{theorem}\label{thm:result}
	Let $G = (V,E)$ be a graph, $p \geq 1$, and $T$ be the $p$-mean densest subgraph of $G$,  \textsc{GenPeel++} returns a subgraph ${S}$ satisfying $2(p+1) f_p({S})  \geq f_p(T)$, i.e., $(2(p+1))^{1/p}M_p({S}) \geq  M_p(T)$ in $O(m\log n)$ time.
\end{theorem}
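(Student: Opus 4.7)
The plan is to identify a single set $S$ in the nested family produced by \textsc{GenPeel++} for which $f_p(S) \geq f_p(T) / (2(p+1))$; since the algorithm returns $\max_I f_p(S_I)$, this yields the $f_p$-bound, and the $M_p$-bound follows by taking $p$-th roots. Specifically, let $S$ be the vertex set at the start of the first phase in which any vertex of $T$ is removed, so $T \subseteq S$. Let $v_1, v_2, \ldots, v_{|S|/2}$ be the vertices slated for removal in this phase, sorted so that $\Delta_{v_1}(S) \leq \cdots \leq \Delta_{v_{|S|/2}}(S)$, and let $v = v_k$ be the first index with $v_k \in T$.

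The core is a two-sided sandwich on $\Delta_v(S)$. For the lower bound, note that for $p \geq 1$ the map $x \mapsto x^p - (x-1)^p$ is nondecreasing in $x$; combined with $T \subseteq S$, this allows a term-by-term comparison of Eq.~(\ref{Delta}) to conclude $\Delta_v(S) \geq \Delta_v(T)$, after which Lemma~\ref{lem:p>0} supplies $\Delta_v(T) \geq f_p(T)$. For the upper bound, $v$ lies in the bottom half under the $\Delta(\cdot)(S)$-ordering, so $\Delta_v(S) \leq \Delta_{v_{|S|/2}}(S)$; because at least $|S|/2 + 1$ of the values $\{\Delta_u(S) : u \in S\}$ are $\geq \Delta_{v_{|S|/2}}(S)$, a Markov-style estimate gives $\Delta_{v_{|S|/2}}(S) \leq (2/|S|)\sum_{u \in S} \Delta_u(S)$. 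Double-counting the total — the self-term contributes $|S| f_p(S)$, while each $u$ appears $d_u(S)$ times in the neighbour term with weight $d_u(S)^p - (d_u(S)-1)^p \leq p\, d_u(S)^{p-1}$ by the mean value theorem for $p \geq 1$ — yields $\sum_u \Delta_u(S) \leq (p+1)|S| f_p(S)$. Chaining everything gives $f_p(T) \leq \Delta_v(S) \leq 2(p+1) f_p(S)$, as required.

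For the runtime, each phase computes $\Delta_j(S)$ for every surviving vertex in $O(|E(S)|)$ work (each remaining edge is touched a constant number of times in Eq.~(\ref{Delta})), and a linear-time selection partitions out the bottom half. Since each phase halves $|S|$, there are at most $O(\log n)$ phases and the cumulative cost is $O(m \log n)$.

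I expect the tight median bound to be the delicate step: a naive argument would only control the \emph{average} $\Delta$-value of the bottom half by the overall average, whereas what is needed is a bound on the \emph{largest} $\Delta$ in the bottom half, since the first $T$-vertex to fall could easily be $v_{|S|/2}$ itself. The extra factor of $2$ in the approximation ratio — the difference between \textsc{GenPeel++} and \textsc{GenPeel} — traces precisely to this medianization. A secondary bookkeeping check is that the snapshot $S$ at the start of a phase is indeed one of the $S_I$ recorded by Algorithm~\ref{alg:GENPEEL++}, which is immediate from the update of the counter $I$ after each phase.
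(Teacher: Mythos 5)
Your proposal is correct and follows essentially the same route as the paper's proof: both take the snapshot $S$ containing $T$ at the start of the phase where the first $T$-vertex $v$ is peeled, chain $f_p(T) \leq \Delta_v(T) \leq \Delta_v(S)$, bound the relevant $\Delta$-value by twice the average of $\sum_{u} \Delta_u(S)$ using the fact that the surviving half all have larger $\Delta$-values, and finish with the same double-counting estimate $\sum_u \Delta_u(S) \leq (p+1)|S| f_p(S)$ via $x^p - (x-1)^p \leq p x^{p-1}$. The only cosmetic differences are that you phrase the medianization as a Markov bound on $\Delta_{v_{|S|/2}}(S)$ rather than averaging over the surviving half, and you use linear-time selection where the paper sorts.
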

The proof here follows the  spirit of Theorem 4.4~\citep{veldt2021generalized}.
\begin{proof}
Let $Y = f_p(T)$, hence $\sum_{i \in T} d_i(T)^p - |T| Y  = 0$. Since $T$ is optimal, removing a node $j$ will produce a set with $p$-density at most $Y$, and therefore, we have $Y \leq \Delta_j(T)$.
	Observe that for any set $S \supseteq T$ and $j \in T$ we have $\Delta_j(T) \leq \Delta_j(S)$.

	Let $S_1\supset S_2 \supset,...,S_{I},...,$ be a set of nested subgraphs we maintain
	in Algorithm~\ref{alg:GENPEEL++}.
	Let $I$ be index such that $T\subseteq S_I$  and $T\setminus S_{I+z} \neq \emptyset$ where $z=|S_I|/2$.
	Namely,  $S_{I}$ be the minimum subgraph containing $T$. Since $j$ is one of the first ''half'' of nodes in $S_I$ to be removed, we know that $S_I \supseteq T$ and $\Delta_j(T) \leq \Delta_j(S_I)$.
	We have  know $\Delta_j(T)$ is smaller than the  value of $\Delta_{i}(S_{I})$ across nodes  $i \in S_{I+z}$, so
	\begin{align*}
	&Y \leq \Delta_j(T) \leq \Delta_{i\in S_{I+z}}(S_I) \leq \frac{1}{|S_{I+z}|} \sum_{i \in S_{I+z}} \Delta_i(S_I)
	\\
	&\leq \frac{1}{|S_{I+z}|}\left( \sum_{i \in S_{I+z}} d_i(S_I)^p + \sum_{i \in S_{I+z}} \sum_{l \in N(i) \cap S_I}( d_l(S)^p - [d_l(S_I)-1]^p) \right)
	\\
	&\leq \frac{ \sum_{i \in S_{I+z}} d_i(S_{I})^p}{|S_{I+z}|} + \frac{p \sum_{i \in S_{I+z}} d_i(S_{I})^{p}}{|S_{I+z}|}
		\\
	&\leq \frac{ \sum_{i \in S_{I}} d_i(S_I)^p}{|S_{I+z}|} + \frac{p \sum_{i \in S_{I}} d_i(S_I)^{p}}{|S_{I+z}|} \leq  2(p+1)f_p(S_I)
	\end{align*}
The last step above uses the fact that  $|S_{I+z}|\geq |S_I|/2 $, and third step above uses the observation 	$p(x-1)^{p-1} \leq x^p - (x-1)^p \leq px^{p-1}$
when  $p \geq 1$ and $x \geq 1$ which was obtained in by~\citet{veldt2021generalized}.
\end{proof}

\textbf{Remark}: Note that the $1/2$-factor in proof above can be replaced by any constant $0<c<1$. Then  the approximation ratio would be $((p+1)/(1-c))^{1/p}$.
	Since the current graph shrinks its size by $(1-c)$ factor in each iteration, Hence the total iteration is $\frac{\log n} {\log (1/(1-c))}$.
Subgraph ${S}$ satisfies that $\frac{p+1}{1-c} f_p({S})  \geq f_p(T)$, i.e., $(\frac{p+1}{1-c})^{1/p}M_p({S}) \geq  M_p(T)$ in $O(\frac{m\log n}{\log \frac{1}{1-c}})$ time for any constant $c$ $(0<c<1)$.
\vspace{0.15in}

\emph{Key differences between peeling algorithms.}
It is worth noting two key differences about removing one node each time based on $\Delta_j$ (\textsc{GENPEEL}) and removing half of the nodes based on $\Delta_j$ without update inside each deletion (\textsc{GenPeel++}). The latter one is purely practical: to keep track of changes to $\Delta_j$ for each node $j$ when peeling takes a lot of time. Since removing a node $v$ will not only change the degrees of nodes within a one-hop neighborhood, but will change $\Delta_j$ values for every node within a two-hop neighborhood of $v$. That is why GENPEEL takes $O(mn)$ time.
The analysis of time complexity of GENPEEL++ : There is only $O(\log n)
$ steps, the time complexity to compute/recompute $\Delta_j$ in each iteration is $O(m)$. Also, Sorting takes another $n\log n$ time in each iteration. However, the number of nodes of the graph is exponentially decreasing.
So the total time in sorting is $n\log n+n/2 (\log n/2)+n/4(\log n/4)+...+=O(n\log n)$. Hence the total time of Algorithm~\ref{alg:GENPEEL++} is $O((n+m)\log n)$.
In the next section, we will demonstrate empirically that \textsc{GenPeel++}  with $p > 1$  significantly outperforms \textsc{GENPEEL} in terms of the running time.

\section{Experiments}
\label{sec:experiments}

In this section, we present an experimental study on \textsc{GENPEEL++} for the task of locating dense subgraphs in real datasets in a fast and highly approximate manner. We will demonstrate the advantages  of \textsc{GENPEEL++} based on the performance and run-time speed. In terms of performance, we illustrate that \textsc{GENPEEL++} achieves a very close approximation to the optimal $p$-mean densest subgraph problem. To be specific, for $p \in [1, +\infty)$, it allows to discover subgraphs under varying meaningful notions of density. With regard to speed, a significant boost (e.g., by a factor of 10) is obtained by \textsc{GENPEEL++} over \textsc{GENPEEL}~\citep{veldt2021generalized}.

\newpage

\subsection{Implementation Details and Environment}

We conduct  our  experiments on a laptop with 8GB of RAM and 1.4 GHz Intel Core i5 processor, our experimental environment is close to the one in~\citet{veldt2021generalized} (where they used a laptop with 8GB of RAM and 2.2 GHz Intel Core i7). All methods were programmed in Julia. For \textsc{GENPEEL}, we remove nodes in min-heap data structure as in~\citet{veldt2021generalized}. As for \textsc{GENPEEL++}, we use the rather simpler array structure, which is more efficient for removing nodes. The optimal submodular minimization approach is implemented in MATLAB, in order to be handy for using existing submodular optimization package~\citep{krause2010sfo}. We shall mention that both \textsc{GENPEEL}~\citep{veldt2021generalized} and \textsc{GENPEEL++} used the same MATLAB package.

\subsection{Datasets}

We simply re-use the same  datasets as those used by~\citet{veldt2021generalized} including SNAP repository~\citep{snapnets} and the SuiteSparse Matrix collection~\citep{davis2011uf}. In order to compare \textsc{GENPEEL++} with \textsc{GENPEEL} for discovering dense subgraphs, we use benchmark graphs from different domains that are familiar to many. These datasets contain two citation networks (ca-Astro, condmat2005), two road networks (road-CA, road-TX), two web graphs (web-Google, web-BerkStan), an email network (Enron), two social networks (BrightKite, YouTube), and a retail graph (Amazon). To more closely demonstrate the change of average degree, size, maximum degree, and edge density changes as $p$, we run our experiments in all graphs from the Facebook100 dataset~\citep{traud2012social}.

\subsection{\textsc{GenPeel++} Approximation Performance}

In contrast to \textsc{GENPEEL}, considering the fact that \textsc{GENPEEL++} executes in the subgraph with a remaining node ratio of $1-c$, this may lead to a rather poor approximation. However, the dense subgraph found by \textsc{GENPEEL++}, is still ensured to be closer to the optimal solution than its worst case. Additionally, as described in~\citet{veldt2021generalized}, when using different $p$ by \textsc{GENPEEL++}, both the optimally-dense subgraphs for different $p$ and the sets found, are distinct from each other.

In order to verify the existence of the same occurrence in \textsc{GENPEEL++}, we take the experiments to find optimal solutions for more larger $p$ values in $P_\mathit{objs} = \{1.0, 1.5, 2.0, \ldots, 8.0\}$ to our objective as \textsc{GENPEEL}~\citep{veldt2021generalized}. Again,
the objective are solved via MATLAB implementation, which uses existing submodular minimization software on graphs with up to 1000 nodes to a small tolerance~\citep{krause2010sfo}. We then run \textsc{GenPeel++} for each $p \in P_\mathit{alg} =\{1.0,1.5,2.0,\dots,8.0\}$.
From the rounded curves in Figure~\ref{newfig:approx}, it is demonstrated that our problem is optimized differently according to each run of \textsc{GenPeel++}. For  Polbooks and Adjnoun, the approximation ratio is  better than $0.7$ for $p\in [1,8]$,  while the approximation ratio is  close to $0.9$ for Dolphins when $p\in [1,8]$.

Here we are also interested in comparing with \textsc{GENPEEL}~\citep{veldt2021generalized}. Similar to the experiment setting in~\citet{veldt2021generalized}, we experimented in Adjnoun and Jazz datasets. The subgraphs found with \textsc{GENPEEL++} are compared with those found with \textsc{GENPEEL}. Rounded curves are shown in Figure~\ref{fig:approx_ori}.
Note that the approximation ratio of \textsc{GENPEEL++} is  $(2(p+1))^{1/p}$, which converges to 1 when $p\rightarrow \infty$. Hence we are able to guarantee a high degree of approximation with \textsc{GENPEEL++}.
\begin{figure}[h]
\centering
\mbox{
	\subfloat[Polbooks, $n = 105$\label{newfig:polbooks}]
	{\includegraphics[width=3in]{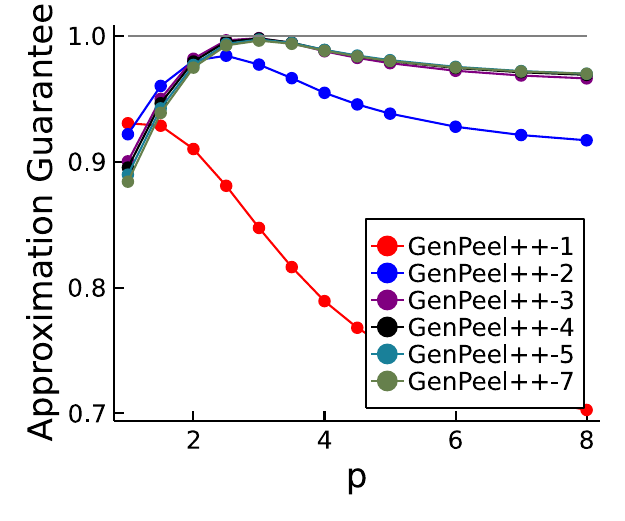}}\hspace{0.2in}
	\subfloat[Adjnoun, $n = 112$\label{newfig:adjnoun}]
	{\includegraphics[width=3in]{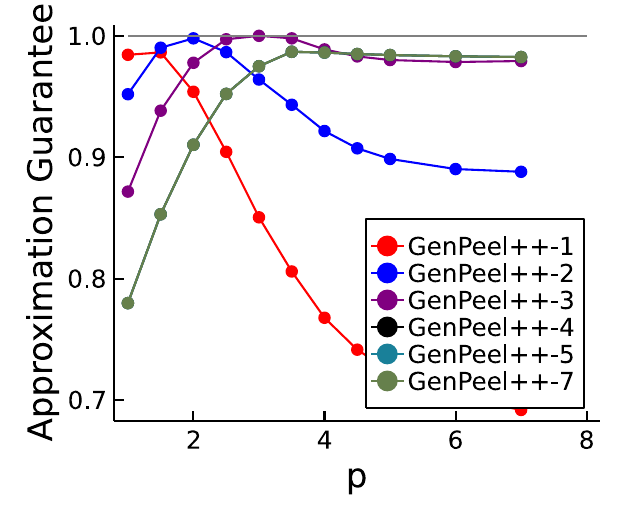}}
	}
	\mbox{
	\subfloat[Dolphins, $n = 198$ \label{newfig:jazz}]
	{\includegraphics[width=3in]{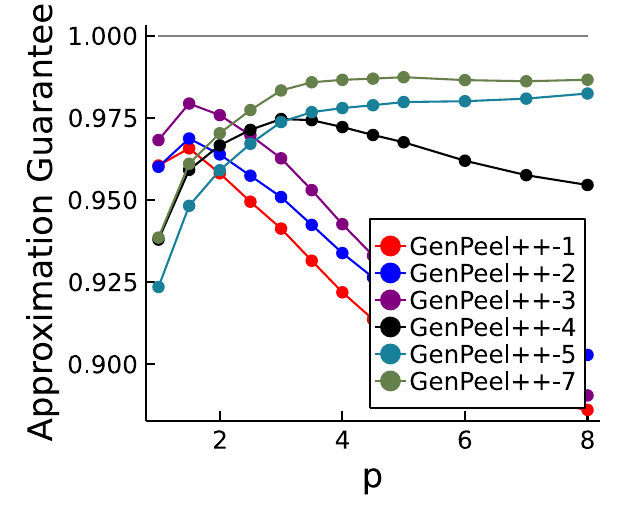}}\hspace{0.2in}
	\subfloat[Lesmis, $n = 1005$\label{newfig:email}]
	{\includegraphics[width=3in]{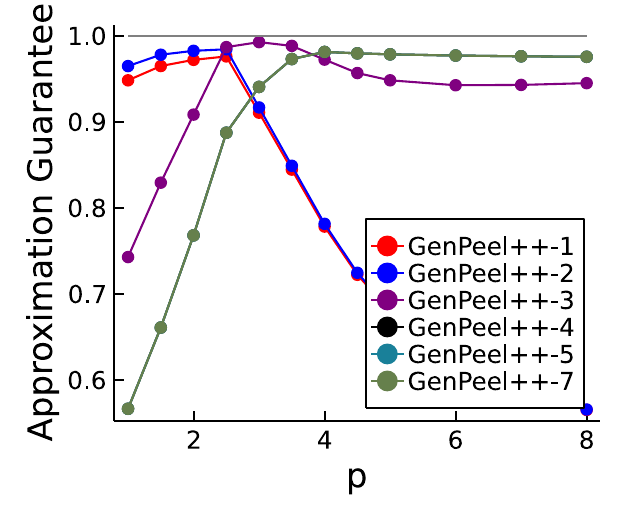}}
}

	\caption{Quality of the \textsc{GenPeel++} heuristic versus exact solution attained with submodular minimization.
	Different runs of \textsc{GenPeel++} make a good approximation to the objective corresponding to different values of $p$ from $1$ to $8$.
	}
	\label{newfig:approx}\vspace{-0.1in}
\end{figure}

\newpage\clearpage

\begin{figure}[h]
\centering
\mbox{
	\subfloat[Adjnoun, $n = 112$\label{fig:adjnoun}]
	{\includegraphics[width=3in]{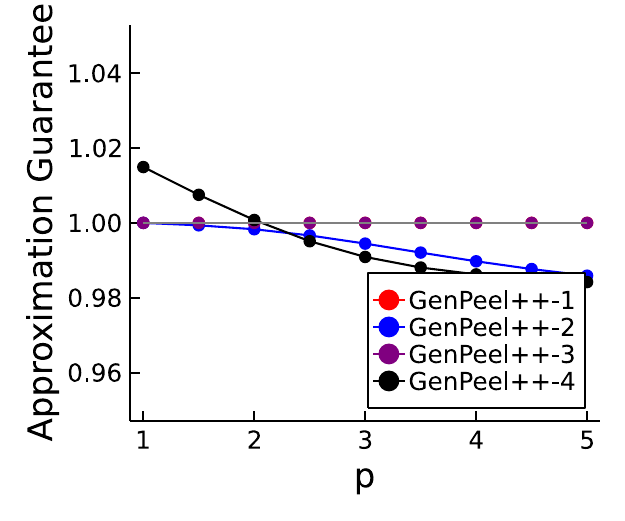}}\hspace{0.2in}
	\subfloat[Jazz, $n = 198$ \label{fig:jazz}]
	{\includegraphics[width=3in]{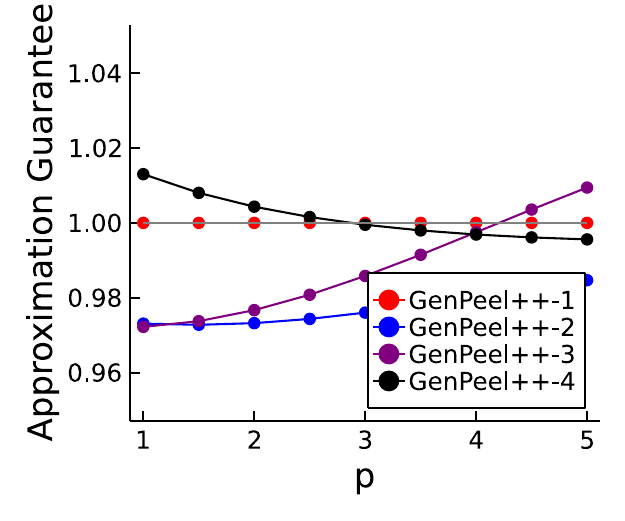}}
}
	\caption{Quality of the \textsc{GenPeel++} heuristic versus solutions obtained with \textsc{GENPEEL}~\citep{veldt2021generalized}, an approximation  provided by \textsc{GENPEEL++}.
	}
	\label{fig:approx_ori}\vspace{-0.2in}
\end{figure}

\subsection{Peeling Algorithms for Dense Subgraphs}

In the next set of experiments, we conduct experiments within the context of the broader parametric peeling algorithm for dense subgraph discovery, instead of the \textsc{SIMPEEL} ($p = 1$), and the peeling algorithm for finding maxcore ($p = -\infty$). With these outputs compared to different $p$ values of \textsc{GenPeel++} that are close to 1, an observation can be made that different datasets find the optimum dense subgraph in different $p$ values. In finding sets that satisfy the traditional notion of $p = 1$ density, we can also see how running \textsc{GenPeel++} for values close to, but not equal to 1 provides an accuracy versus runtime trade-off.  We choose $p\in\{0.5, 1.0, 1.05, 1.5, 2.0\}$. Note that \textsc{SIMPEEL}  has  constant approximation guarantees for $0<p<1$, but we found that  \textsc{GENPEEL++} outperforms than \textsc{SIMPLEEL} even for $0<p<1$ in those datasets. So we use  \textsc{GENPEEL++} for both case $p>1$ and case $0<p<1$. Besides, \textsc{GENPEEL++}
can greedily optimize the $p$-mean density target and produces meaningfully different subgraphs. We also focus on the comparison for $p \in [0.5,2]$ in order to make   the experimental setting here to be consistent with \textsc{GENPEEL}.

\vspace{0.1in}

In Table~\ref{tab:fast_snap}, we report the edge density (number of edges divided by number of pairs of nodes), the size of the set returned, and the average degree (i.e., the $p = 1$ objective). We can observe that subgraphs have a tendency to be smaller and have a higher density of edges as $p$ decreases. This observation is consistent with the previous one in \textsc{GenPeel}~\citep{veldt2021generalized}.
At the same time, there has been a significant increase in the average squared degree and the maximum degree. Possibly the highlight is that running \textsc{GenPeel++} with $p > 1$ prefer to yield a superior set than \textsc{SimplePeel} in the context of the standard densest subgraph objective. In terms of runtime, \textsc{GENPEEL++} is extremely efficient and works very rapidly even in very large graphs.
In Table~\ref{tab:approx_label}, we provide a demonstration of the approximate performance of \textsc{GenPeel++} against \textsc{GenPeel} in four graph datasets.  $\textbf{avg } d_v(S)^2$ and $\textbf{avg } d_v(S)^2$ are used as evaluation metrics. The experimental results show that  \textsc{GENPEEL++} is able to deliver a sound approximation to \textsc{GENPEEL}.

\begin{table*}[t!]

\vspace{0.3in}

	\caption{
	The parameter $c$ (ratio of preserved nodes) is set to $0.5$. We compare \textsc{GENPEEL++} for the $p$-mean densest subgraph with \textsc{SimplePeel} (the $p = 1$ special case) and the maxcore of a graph (p = $-\infty$) in different datasets across the different evaluation criteria, note that these datasets used are the same as ones in \textsc{GENPEEL}~\citep{veldt2021generalized}. \textsc{SimplePeel}  ($p = 1$) and maxcore cases take approximately the same amount of time to run, as they tend to rely on finding the same ordering of nodes with different stopping points.
Both \textsc{SimplePeel} and \textsc{GenPeel++} run fast. \textsc{GenPeel++} in several cases produce better results for edge density and average degree in comparison with \textsc{SIMPEEL}. We highlight in bold the result better than obtained for each kind of  density.
	}
	\label{tab:fast_snap}
	\centering
	\scalebox{0.68}{\begin{tabular}{l   |  l  | l l l     l l l  l l l l }
			\toprule
			&& \textbf{Astro} & \textbf{CM05} & \textbf{BrKite} & \textbf{Enron} & \textbf{roadCA} & \textbf{roadTX} & \textbf{webG} & \textbf{webBS} & \textbf{Amaz} & \textbf{YTube} \\
			&$|V|$& 17,903 &36,458 & 58,228 & 36,692 & 1,971,281 & 1,393,383 & 916,428 & 685,230 & 334,863 & 1,134,890  \\
			\textbf{Metric}& $|E|$ & 196,972 &171,734 &21,4078 &183,831 &2,766,607 &1,921,660 &4,322,051 &6,649,470 &925,872 &2,987,624 \\
			\midrule
			Size & \emph{maxcore}& 57& 30& 154& 275& 4568& 1579& 48& 392& 497& 845 \\
			& $p = 0.5$& 469  & 469  & 281  & 435  & 10  & 6  & 226  & 391  & 8  & 984 \\
			$|S|$& $p = 1.0$&1184  & 561  & 219  & 548  & 11  & 3721  & 240  & 392  & 34  & 1863\\
			& $p = 1.05$&1673  & 535  & 164  & 473  & 5  & 292  & 231  & 392  & 10  & 1793 \\
			& $p = 1.5$&890  & 821  & 182  & 850  & 5  & 231  & 221  & 5089  & 127  & 2985\\
			& $p = 2.0$& 915  & 964  & 201  & 1046  & 16  & 271  & 4450  & 34953  & 557  & 29725 \\
			\midrule
			Edge  & \emph{maxcore}& \textbf{1.0}& \textbf{1.0}& \textbf{0.502}& \textbf{0.256}& 0.001& 0.002& \textbf{0.994}& \textbf{0.529}& 0.014& \textbf{0.102} \\
			Density& $p = 0.5$& 0.123  & 0.064  & 0.255  & 0.165  & 0.333  & \textbf{0.600}  & 0.235  & \textbf{0.529}  & 0.964  & 0.085 \\
			& $p = 1.0$& 0.050  & 0.056  & 0.372  & 0.137  & 0.345  & 0.001  & 0.227  & \textbf{0.529}  & 0.230  & 0.049 \\
			$|E_S|/{|S| \choose 2}$& $p = 1.05$& 0.036  & 0.058  & 0.475  & 0.155  & \textbf{0.800}  & 0.012  & 0.228  & 0.529  & 0.867  & 0.049 \\
			& $p = 1.5$& 0.069  & 0.038  & 0.440  & 0.086  & \textbf{0.800}  & 0.015  & 0.240  & 0.033  & 0.073  & 0.029 \\
			& $p = 2.0$&0.068  & 0.032  & 0.404  & 0.067  & 0.225  & 0.013  & 0.005  & 0.001  & 0.005  & 06\\
			\midrule
			Avg  & \emph{maxcore}& 56.0& 29.0& 76.87& 70.06& 3.32& 3.34& 46.71& \textbf{206.81}& 6.77& 86.07 \\
			Degree& $p = 0.5$& 57.484  & 30.119  & 71.345  & 71.664  & 3  & 3  & 52.814  & 206.312  & 6.750  & 83.686 \\
			& $p = 1.0$& 59.231  & \textbf{31.554}  & \textbf{81.114}  & \textbf{74.682}  & \textbf{3.455}  & \textbf{3.491}  & \textbf{54.358}  & \textbf{206.811}  & 7.588  & \textbf{91.16} \\
			$\textbf{avg } d_v(S)$& $p = 1.05$& 59.423  & 30.819  & 77.476  & 73.353  & 3.200  & 3.459  & 52.442  & \textbf{206.811}  & 7.800  & 89.50 \\
			& $p = 1.5$& 61.218  & 30.916  & 79.637  & 72.616  & 3.200  & 3.455  & 52.769  & 167.787  & \textbf{9.228}  & 88.14 \\
			& $p = 2.0$& \textbf{61.864}  & 30.423  & 80.806  & 70.411  & 3.375  & 3.402  & 20.343  & 44.031  & 2.693  & 20.03 \\
			\midrule
			Avg & \emph{maxcore}& 3136.0& 841.0& 6335.5& 5685.5& 11.3& 11.7& 2182.4& 43840.3& 47.4& 9227.8 \\
			Squared & $p = 0.5$& 3688.064  & 1104.192  & 6048.164  & 6367.430  & 9  & 9  & 2862.389  & 43631.284  & 45.750  & 8893.49 \\
			Degree & $p = 1.0$& 4154.3& 1265.8& \textbf{7614.1} & 7301.6& 12.2& 12.7& 3031.9& 43840.3& 59.2& 12146.5 \\
			& $p = 1.05$& 4479.787  & 1181.929  & 6493.366  & 6786.359  & 10.400  & 12.753  & 2860.580  & 43840.281  & 61.600  & 11422.13 \\
			$\textbf{avg } d_v(S)^2$& $p = 1.5$& 4726.915  & 1298.037  & 6966.187  & 7530.878  & 10.400  & \textbf{12.788}  & 2843.955  & 150686.885  & 351.496  & 13504.90\\
			& $p = 2.0$& \textbf{5123.436}  & \textbf{1361.786}  & 7363.433  & \textbf{7855.231}  & \textbf{12.375}  & 12.613  & \textbf{9550.399}  & \textbf{455832.538}  & \textbf{543.820}  & \textbf{33238.71} \\
						\midrule
			Max & \emph{maxcore}& 56& 29& 153& 216& \textbf{7} & \textbf{12}& 47& 391& 13& 447 \\
			Degree & $p = 0.5$& 175  & 118  & 201  & 277  & 3  & 3  & 81  & 390  & 7  & 515 \\
			& $p = 1.0$& 272  & 163  & \textbf{214}  & 333  & 4  & \textbf{12}  & 84  & 391  & 10  & 954 \\
			$\textbf{max } d_v(S)$& $p = 1.05$& 299  & 136  & 160  & 299  & 4  & 8  & 82  & 391  & 9  & 890 \\
			& $p = 1.5$& 274  & 188  & 179  & 424  & 4  & 7  & 81  & 5088  & 106  & 1397 \\
			& $p = 2.0$& \textbf{312}  & \textbf{236}  & 197  & \textbf{502}  & 4  & 9  & \textbf{2295}  & \textbf{34942}  & \textbf{548}  & \textbf{28754} \\
						\midrule
			Runtime & \emph{maxcore}& 0.03& 0.03& 0.06& 0.04& 1.37& 0.96& 2.4& 11.27& 0.45& 2.51 \\
			& $p = 0.5$& 0.748  & 0.141  & 0.220  & 0.170  & 4.997  & 3.194  & 5.960  & 15.313  & 1.213  & 5.873\\
			& $p = 1.0$& 0.037  & 0.031  & 0.048  & 0.045  & 1.187  & 0.764  & 2.041  & 18.250  & 0.351  & 2.385 \\
			& $p = 1.05$& 0.140  & 0.206  & 0.218  & 0.170  & 5.516  & 3.785  & 6.060  & 15.628  & 1.218  & 5.85 \\
			& $p = 1.5$&0.137  & 0.144  & 0.219  & 0.166  & 5.300  & 3.636  & 5.952  & 15.550  & 1.255  & 6.08 \\
			& $p = 2.0$& 0.149  & 0.202  & 0.222  & 0.165  & 5.230  & 3.567  & 5.626  & 14.808  & 1.220  & 5.31 \\
			\bottomrule
	\end{tabular}}
\end{table*}

\newpage\clearpage

\begin{table*}[t!]
    \caption{Comparison of \textsc{GENPEEL++} and \textsc{GENPEEL} in four datasets: BrKite, roadCA, webBS, YTube. $\textbf{avg } d_v(S)$ and $\textbf{avg } d_v(S)^2$ are used as evaluation metrics. The experimental results show that  \textsc{GENPEEL++} provides a sound approximation to \textsc{GENPEEL}. The light grey numbers in brackets indicate the value of the approximate ratio of \textsc{GENPEEL} to \textsc{GENPEEL}.}
    \centering
    \begin{tabular}{l | l | c  c | c  c}
    \toprule
    \multicolumn{2}{l}{Method} & \multicolumn{2}{c|}{\textsc{GENPEEL++}} & \multicolumn{2}{c}{\textsc{GENPEEL~\citep{veldt2021generalized}}} \\
    \midrule
    Dataset &   &$\textbf{avg } d_v(S)$ & $\textbf{avg } d_v(S)^2$ & $\textbf{avg } d_v(S)$ & $\textbf{avg } d_v(S)^2$ \\
    \hline
    \multirow{4}{*}{BrKite} & $p = 0.5$  & 71.345 (\textcolor{shadecolor}{0.88}) & 6048.164 (\textcolor{shadecolor}{0.82}) & 80.91  & 7372.9 \\
                            & $p = 1.05$ & 77.476 (\textcolor{shadecolor}{0.96}) & 6493.366 (\textcolor{shadecolor}{0.86}) & 81.12 & 7624.9 \\
                            & $p = 1.5$  & 79.637 (\textcolor{shadecolor}{0.99}) & 6966.187 (\textcolor{shadecolor}{0.90}) & 80.8 & 7776.6 \\
                            & $p = 2.0$  & 80.806 (\textcolor{shadecolor}{1.02})& 7363.433 (\textcolor{shadecolor}{0.9}) & 78.99 & 7882.1 \\
    \hline
    \multirow{4}{*}{roadCA} & $p = 0.5$  & 3 (\textcolor{shadecolor}{0.90})   & 9 (\textcolor{shadecolor}{0.80})    & 3.32 & 11.3 \\
                            & $p = 1.05$ & 3.200 (\textcolor{shadecolor}{0.87}) & 10.400 (\textcolor{shadecolor}{0.76}) & 3.67 & 13.7 \\
                            & $p = 1.5$  & 3.200 \textcolor{shadecolor}{0.87}& 10.400 \textcolor{shadecolor}{0.76} & 3.67 & 13.7 \\
                            & $p = 2.0$  & 3.375 (\textcolor{shadecolor}{0.93}) & 12.375 (\textcolor{shadecolor}{0.89})  & 3.62 & 13.9 \\
    \hline
    \multirow{4}{*}{webBS} & $p = 0.5$   & 206.312 (\textcolor{shadecolor}{0.99}) & 43631.284 (\textcolor{shadecolor}{0.99}) & 206.81 & 43840.3 \\
                            & $p = 1.05$ & 206.811 (\textcolor{shadecolor}{1.00}) & 43840.281 (\textcolor{shadecolor}{1.00}) & 206.81 & 43840.3 \\
                            & $p = 1.5$  & 167.787 (\textcolor{shadecolor}{1.01} & 150686.885 \textcolor{shadecolor}{0.99} & 166.93 & 157225.3 \\
                            & $p = 2.0$  & 44.031 (\textcolor{shadecolor}{1.00}) & 455832.538 (\textcolor{shadecolor}{1.00})& 44.04 & 455975.9 \\
    \hline
    \multirow{4}{*}{YTube} & $p = 0.5$   & 83.686 (\textcolor{shadecolor}{0.92}) & 8893.49 (\textcolor{shadecolor}{0.77}) & 90.76 & 11486.8 \\
                            & $p = 1.05$ & 89.50 (\textcolor{shadecolor}{0.98}) & 11422.13 (\textcolor{shadecolor}{0.93}) & 91.18 & 12220.1 \\
                            & $p = 1.5$  & 88.14 (\textcolor{shadecolor}{0.99}) & 13504.90 (\textcolor{shadecolor}{0.94}) & 88.86 & 14359.8 \\
                            & $p = 2.0$  & 20.03 (\textcolor{shadecolor}{1.01})& 33238.71 (\textcolor{shadecolor}{1.00}) & 19.88 & 33262.3 \\
    \bottomrule
    \end{tabular}
    \label{tab:approx_label}
\end{table*}

\newpage\clearpage

\begin{figure}[h]
	\centering
	\subfloat[Edge Density in \textsc{GL++} \label{fig:edge_gen++}]
	{\includegraphics[width=.25\linewidth]{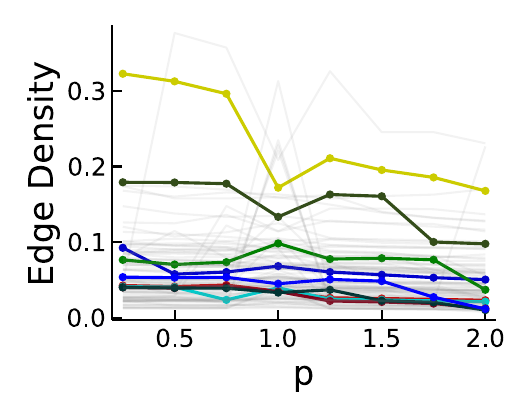}}\hfill
	\hspace{-0.1in}
	 \subfloat[Edge Density in \textsc{GL} \label{fig:edge_gen}]
	{\includegraphics[width=.25\linewidth]{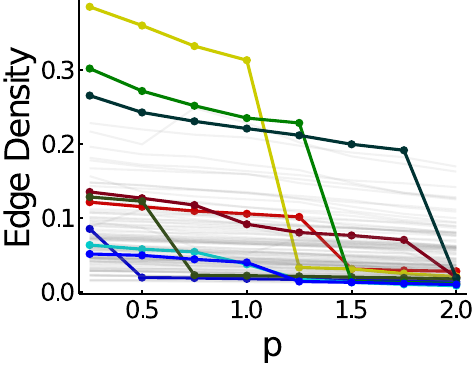}}\hfill
\hspace{-0.1in}
	\subfloat[Size in \textsc{GL++} \label{fig:size_gen++}]
	{\includegraphics[width=0.25 \linewidth]{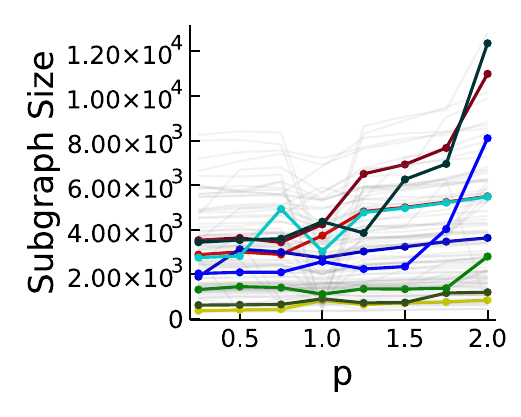}}\hfill
	\subfloat[Size in \textsc{GL}  \label{fig:size_gen} ]
	{\includegraphics[width=0.25 \linewidth]{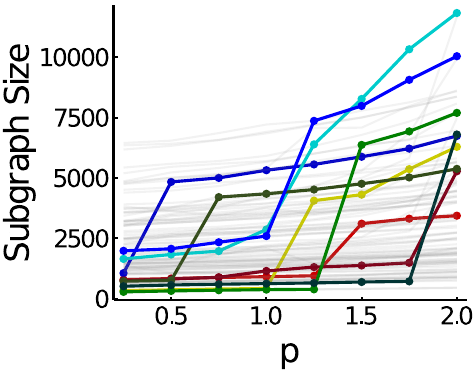}}

 \vspace{-0.1in}

 \subfloat[Max Degree in \textsc{GL++} \label{fig:max_degree_gen++}]
	{\includegraphics[width=.25\linewidth]{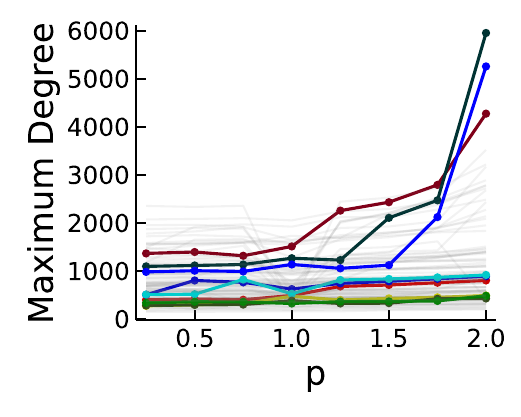}}
	\subfloat[Max Degree in \textsc{GL}  \label{fig:max_degree_gen} ]
	{\includegraphics[width=.25\linewidth]{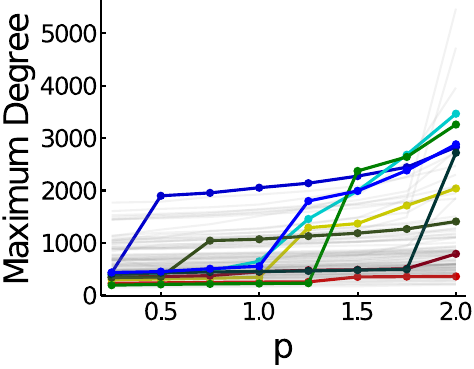}}
	\subfloat[Mean Degree in \textsc{GL++} \label{fig:mean_degree_gen++}]
	{\includegraphics[width=.25\linewidth]{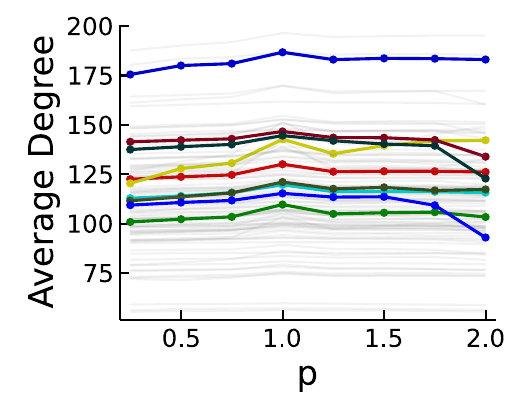}}\hfill
	\subfloat[Mean Degree in \textsc{GL} \label{fig:mean_degree_gen} ]
	{\includegraphics[width=.25\linewidth]{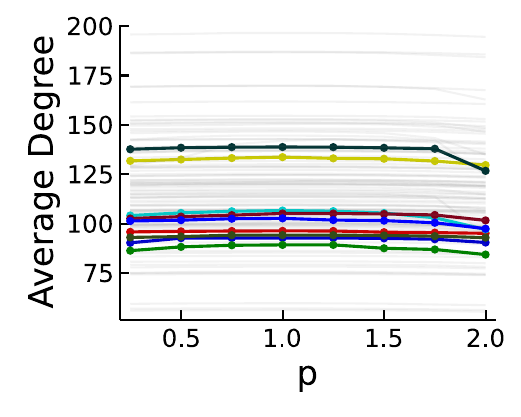}}\hfill

	\caption{Comparison between \textsc{GENPEEL++} (``GL++'' for short) and \textsc{GENPEEL} (``GL'' for short)~\citep{veldt2021generalized} in Facebook for $p \in [0.25,2]$.  Following~\citet{veldt2021generalized},  each line (total 100 lines ) represents a graph. There is a lot of overlap in the set. The 10 networks with the top most significant size changes are in color  to show their leading trends. (a) Edge density in \textsc{GENPEEL++}. (b)  Edge density in \textsc{GENPEEL}. (c) Size in \textsc{GENPEEL++}.  (d) Size in GENPEEL. (e) Maximum Degree in \textsc{GENPEEL++}. (f) Maximum Degree in \textsc{GENPEEL}. (g) Mean Degree in \textsc{GENPEEL++}. (h) Mean Degree in \textsc{GENPEEL}. }
	\label{fig:fb}
\end{figure}

\begin{figure}[b!]
\vspace{-0.28in}
\centering
\mbox{
	\includegraphics[width=2.9in]{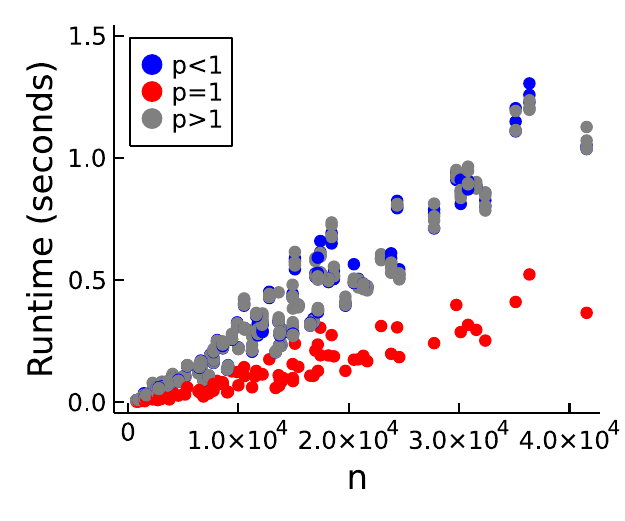}
}

\vspace{-0.25in}

	\caption{Runtime for Facebook100. \textsc{GenPeel++} takes roughly 1.5 seconds~to~run in worst case.}
	\label{fig:fbrun-fast}\vspace{-0.3in}
\end{figure}

\textbf{Dense Subgraphs in Social Networks.}
Then we positioned diverse dense subgraphs across the Facebook100 dataset using \textsc{GenPeel++}. Each graph Among the Facebook100 dataset is a snapshot of the American University Facebook network. We run \textsc{GENPEEL++} for $p\in\{0.5, 1.0, 1.05, 1.5, 2.0\}$
In Figure~\ref{fig:fb}, we plot the variation in  edge density, size, maximum degree, and mean degree varying $p$ for each graph. As $p$ increases, the maximum degree of the subgraph follows (Figure~\ref{fig:max_degree_gen++}), Conversely, as $p$ decreases, the resulting set will have a higher edge density (Figure~\ref{fig:edge_gen++}).
The mean degree in (Figure~\ref{fig:mean_degree_gen++}) almost does not change too much for different $p$ values.
These observations are consistent with those in~\textsc{GenPeel}. Mindful the difference here is that~\textsc{GenPeel++} produces significantly higher edge density than~\textsc{GenPeel} when $p$ is large. We also measured the runtime of the algorithm for different subgraphs in Facebook100 dataset. Figure~\ref{fig:fbrun-fast} is a scatter plot of points ($n$,$s$) where $n$ is node number in a Facebook graph and $s$ the running time in seconds. When $p = 1$, we use the \textsc{SIMPEEL} algorithm which is consistent with \textsc{GENPEEL}. For $p\neq 1$, in this case, \textsc{GENPEEL} takes a little more time. However, overall, \textsc{GenPeel++} still requires very little time and runs very fast.
In order to be able to assess the benefit of the new algorithm \textsc{GenPeel++}, more study for the full range of $p$ is given in subsection below.

\subsection{Speedup in Larger Graph Datasets}
\label{sec:more_experiment}
We had illustrated the advantages of \textsc{GENPEEL++} over \textsc{GenPeel}, on larger graphs in Table~\ref{tab:fast_snap}. Focusing specifically in the webBS and webG datasets, we can demonstrate the benefits of \textsc{GENPEEL++} in terms of speedup. From Figure~\ref{fig:p}, we can see that, subject to a small gap in Avg Degree, \textsc{GENPEEL++} is able to achieve up to a \textbf{40x} speedup in WebBS, and an \textbf{8x} speedup in webG. This confirms  the speed of \textsc{GENPEEL++} in large graphs is remarkable.

\begin{figure}[h]

\mbox{
\includegraphics[width=0.55\linewidth]{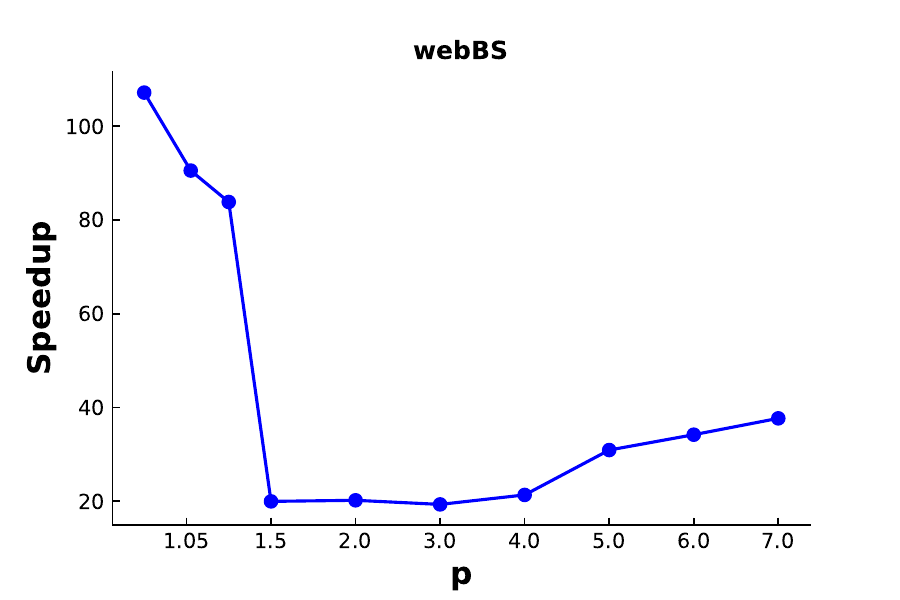}\hspace{-0.3in}	\includegraphics[width=0.55\linewidth]{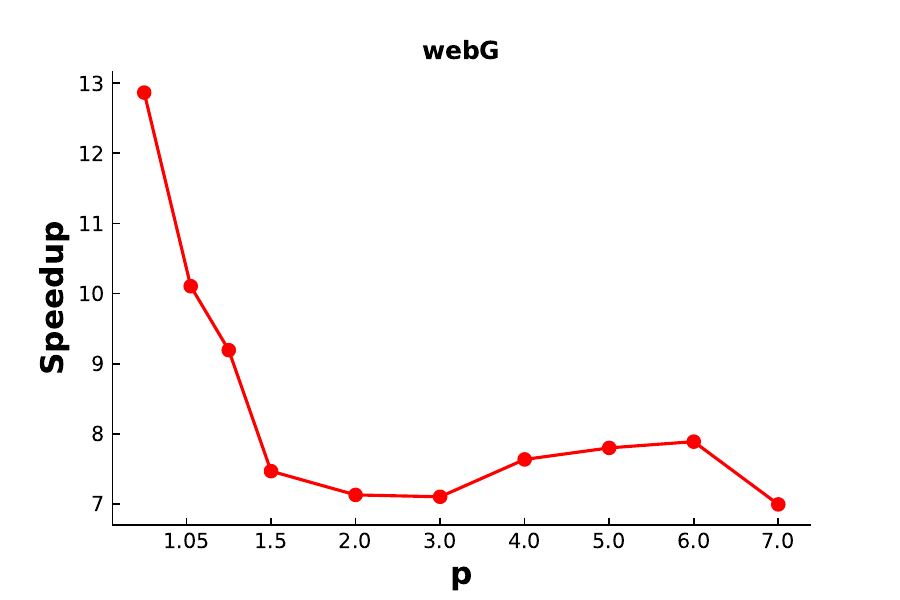}
}

	\caption{The speedup ratios of \textsc{GENPEEL++} versus \textsc{GENPEEL} in webBS (left) and webG (right) graph datasets.
	}
	\label{fig:p}\vspace{-0.15in}
\end{figure}

\section{Conclusion}
The $p$-mean densest subgraph objective  provides a general framework for capturing different notions of density in the same graph. In this paper, we have designed a faster approximation algorithm (GENPEEL++) with $O(m\log n)$ time, which improves the previous $O(mn)$ time algorithm. In GENPEEL++, we show that it is not necessary to update the ``insight'' information (the impact on the objective when a node is removed) once a node is removed.
Also, for $p \in (0, 1)$, we are able to show that the standard peeling algorithm yields a constant approximation. This means a single peeling algorithm on the nodes can be used to define a nested set of dense subgraphs that can well approximate our objective for a wide range of $p$ values. Our extensive
experimental results  have shown that the proposed GENPELL++ algorithm achieves really close approximations compared to the previous GENPEEL algorithm,
	and GENPELL++  performs significantly faster than  GENPEEL in large real datasets, coming from numerous domains.

 \newpage

\bibliography{refs_scholar}
\bibliographystyle{plainnat}

\end{document}